\documentclass[11pt,a4paper]{amsart}
\pdfoutput=1

\newtheorem{thm}{Theorem}[section]
\newtheorem{prop}[thm]{Proposition}

\newtheorem{lem}[thm]{Lemma}
\newtheorem{hypo}[thm]{Hypothesis}

\theoremstyle{definition}

\newtheorem{ex}[thm]{Example}


\usepackage[square,sort,comma,numbers]{natbib}
\usepackage[english]{babel}
\usepackage{graphicx}
\usepackage[all,cmtip,2cell,graph]{xy}
\usepackage{pstricks}
\usepackage{enumerate}
\usepackage{amsfonts,amssymb,amsmath,pinlabel,array,hhline}
\usepackage{slashed}
\usepackage{tabulary}
\usepackage{fancyhdr}
\usepackage{a4wide}
\usepackage{mathrsfs}
\usepackage{calrsfs}
\usepackage{bbm,dsfont}
\usepackage{mathtools}
\UseAllTwocells


\DeclareSymbolFont{EulerScript}{U}{eus}{m}{n}
\DeclareSymbolFontAlphabet\mathscr{EulerScript}

\newcommand{\R}{\mathbb{R}}
\newcommand{\Z}{\mathbb{Z}}

\newcommand{\C}{\mathbb{C}}
\newcommand{\cov}{\mathrm{Cov}}
\newcommand{\var}{\mathrm{Var}}

\setlength{\parskip}{2pt}

\begin{document}

\title{The topological hypothesis for discrete spin models}

\author{David Cimasoni}
\address{University of Geneva, Section of mathematics, CH-1211 Gen\`eve, Switzerland}
\email{david.cimasoni@unige.ch}

\author{Robin Delabays}
\address{University of Applied Science of Western Switzerland, CH-1950 Sion, Switzerland}
\email{robin.delabays@hevs.ch}

\subjclass[2010]{82B20, 82B26, 82B05} 
\keywords{Phase transition, lattice spin models, topological hypothesis, Ising model}

\begin{abstract}
The topological hypothesis claims that phase transitions in a classical statistical mechanical system are related to changes in the topology of the level sets of the Hamiltonian. So far, the study of this hypothesis has been restricted to continuous systems. The purpose of this article is to explore discrete models from this point of view. More precisely, we show that some form of the topological hypothesis holds for a wide class of discrete models, and that its strongest version is valid for the Ising model on~$\Z^d$ with the possible exception of dimensions~$d=3,4$.
\end{abstract}

\maketitle

\section{Introduction}\label{sec:intro}

In 1997, a new and unconventional approach to the study of equilibrium phase transitions was suggested by Caiani {\em et al.}~\cite{Cai97}. In a nutshell, the idea of this {\em topological approach\/} is to consider the configuration space~$\Omega_\Lambda$ as a manifold, the Hamiltonian~$H\colon\Omega_\Lambda\to\R$ as a Morse function, and to relate the appearance of a phase transition (understood as a non-analyticity of some thermodynamic function, usually the pressure) to a change in the topology of the manifold~$M_\Lambda(u)=\{\omega\in\Omega_\Lambda\colon H(\omega)\le |\Lambda| u\}$ as the number~$|\Lambda|$ of particles tends to infinity. Originally supported only by numerical evidence~\cite{Cai97,Fra00} and phrased in rather vague terms, this hypothesis was later formulated as a series of conjectures commonly refered to as the {\em topological hypothesis\/}, and some of these conjectures were proven to hold for the mean-field~$XY$-model~\cite{Cas03} and the mean-field~$k$-trigonometric model~\cite{Ang05a} (see also~\cite{Ang05b,Far11,San17}). Furthermore, Franzosi and Pettini proved that for a certain class of models, a topological change within the family of manifolds~$\{M_\Lambda(u)\}_{u\in\R}$ with~$|\Lambda|$ large is a necessary condition for a phase transition to occur~\cite{Fra04,Fra07b} (see also~\cite{KM11,MHK12,GFP18}). However, it soon became clear that the initial hope of this topological approach providing a general description of phase transitions was over-optimistic. Indeed, none of the various incarnations of the topological hypothesis holds true for arbitrary systems~(see e.g.~\cite{Kas06,Ris06,Ang07}). We refer the reader to the beautiful survey~\cite{Kas08a} and references therein for more details (see also~\cite{Buc08}).

To this day, the study of the topological hypothesis has been restricted to continuous models, i.e. models where the manifold~$\Omega_\Lambda$ has positive dimension. However, discrete spaces are (zero-dimensional) manifolds in their own right, so it makes perfect sense to explore the validity of the topological hypothesis for discrete models. This is the aim of the present article.

To be more precise, we study the strongest version of the topological hypothesis, which equates a phase transition at inverse temperature~$\beta_c>0$ with a non-analyticity of the logarithmic density~$\sigma$ of the Euler characteristic of~$M_\Lambda(u)$ at the corresponding energy~$u_c\in\R$ (see Section~\ref{sec:prelim} below). For this statement to make sense, we need this correspondence between inverse temperatures and energies to be one-to-one; in other words, we need equivalence of ensembles to hold (see Section~\ref{sec:ens_equiv}), and the pressure to be differentiable and strictly convex. In our main theorem, we show that under some hypotheses that ensure the occurence of this situation, a slightly modified version of the topological hypothesis holds true (Theorem~\ref{thm:main}). We then apply this result to the ferromagnetic nearest neighbour Ising model on~$\Z^d$, where the full topological hypothesis is shown to hold with the possible exception of dimensions~$d=3,4$.

Obviously, a discrete space is {\em topological\/} in the technical sense of the word, but not so much in the ``intuitive'' sense. For this reason, it is fair to say that there is not much topology left in the topological hypothesis for discrete spaces. These semantic considerations aside, proving the validity of this hypothesis for a wide class of discrete models provides an indisputable argument in favor of the topological approach.

This article is organised as follows. Section~\ref{sec:prelim} contains the definitions and terminology necessary for the statement of the topological hypothesis. In Section~\ref{sec:ens_equiv}, we recall classical results on the equivalence of ensembles for lattice spin models. Finally, in Section~\ref{sec:main}, we relate the function~$\sigma$ to the entropy, we study the strong convexity of the pressure, prove our main result, and illustrate it with the example of the ferromagnetic Ising model.

\subsection*{Acknowledgments} The authors would like to thank Hugo Duminil-Copin, Sacha Friedli and Yvan Velenik for helpful discussions, as well as the anonymous referee for useful comments. DC was partially supported by the Swiss FNS. RD was supported by the SNF AP Energy grant PYAPP2\textunderscore 154275.

\section{The topological hypothesis}
\label{sec:prelim}

In this somewhat dry preliminary section, we recall the definitions and terminology necessary for the statement of the topological hypothesis in the general setting of spin systems, following~\cite{Kas08a,Tou15}. We refer the interested reader to these articles for further details.

\subsection{Thermodynamic equivalence of ensembles}
\label{sub:equiv}

We consider a spin system of a finite set~$\Lambda$ of classical particles. Such a system is characterized by a {\em Hamiltonian\/}
\[
H_\Lambda\colon\Omega_\Lambda\to\R
\]
defined on the {\em configuration space\/}~$\Omega_\Lambda=S^\Lambda$, where~$S$ is some measured space called the {\em spin space\/}. We will denote by~$\rho_\Lambda$ the corresponding product measure on~$\Omega_\Lambda$, with respect to which~$H_\Lambda$ is assumed to be measurable. Given a {\em spin configuration\/}~$\omega\in\Omega_\Lambda$, the quantities~$H_\Lambda(\omega)$ and~$\frac{H_\Lambda(\omega)}{|\Lambda|}$ are called the {\em energy\/} and {\em energy per particle\/} of~$\omega$, respectively.

From this data, two thermodynamic functions can be defined. On the one hand, the {\em pressure\/} is the function of the inverse temperature~$\beta\in\R$ given by
\begin{equation*}
\psi(\beta) \coloneqq \lim_{|\Lambda|\to\infty}-\frac{1}{|\Lambda|}\log\int_{\Omega_\Lambda}e^{-\beta H_{\Lambda}(\omega)}d\rho_\Lambda(\omega)\,.
\end{equation*}
On the other hand, the {\em microcanonical entropy\/} is the function of the energy per particle~$u\in\R$ given by
\begin{equation*}
s(u) \coloneqq \lim_{r\to 0}\lim_{|\Lambda|\to\infty}\frac{1}{|\Lambda|}\log \rho_\Lambda\left\{\omega\in\Omega_\Lambda\colon\frac{H_\Lambda}{|\Lambda|}\in(u-r,u+r)\right\}\,.
\end{equation*}

Under some assumptions (see e.g.~\cite{Rue69}, and Section~\ref{sub:LPS} below), it can be shown that if the function~$s$ exists, then~$\psi$ also exists and is equal to the {\em Legendre-Fenchel transform\/}\footnote{The standard form of the Legendre-Fenchel transform is~$\tilde{s}(\beta)\coloneqq\sup_{u\in\R}\{\beta u-s(u)\}$, which is related to~$s^*$ via~$\tilde{s}(\beta)=-(-s)^*(-\beta)$. Also, the standard definition of the pressure is~$p(\beta)=-\beta^{-1}\psi(\beta)$. Following~\cite{Tou15}, we use these slightly modified conventions to avoid carrying signs around. Note that~$\psi(\beta)$ and~$f(\beta)=\beta^{-1}\psi(\beta)$ are also commonly referred to as the {\em free energy\/}.} of~$s$:
\[
\psi(\beta)=s^*(\beta)\coloneqq\inf_{u\in\R}\{\beta u-s(u)\}\,.
\]
If~$s$ is concave, then the inverse equality~$\psi^*=s$ also holds and {\em thermodynamic equivalence of ensembles\/} is said to occur~\cite{Tou15}.
Assuming further that~$s$ and~$\psi$ are (continuously) differentiable, the functions~$s'$ and~$\psi'$ are inverses of each other. This provides a one-to-one correspondence between inverse temperatures and energies per particle.

\subsection{The topological hypothesis}
\label{sub:TH}

The system is said to undergo a \emph{phase transition} at inverse temperature~$\beta>0$ if the pressure~$\psi$ is not smooth at~$\beta$, i.e. if it is not infinitely many times differentiable at~$\beta$. Following a slightly outdated terminology, we will say that this phase transition is of {\em order~$p\ge 1$\/} if~$\psi$ is~$(p-1)$ times but not~$p$ times differentiable at~$\beta$.

Let us now assume that the measured space~$S$ is endowed with a topology turning it into a compact Hausdorff space, so that the Hamiltonian~$H_\Lambda\colon\Omega_\Lambda\to\R$ is continuous with respect to the corresponding product topology on~$\Omega_\Lambda=S^\Lambda$. For any~$u\in\R$, consider the subspace 
\begin{align*}
M_\Lambda(u) &\coloneqq \{\omega\in\Omega_\Lambda\colon H_\Lambda(\omega)\leq |\Lambda|u\}\,.
\end{align*}
Note that this space is closed in the compact space~$\Omega_\Lambda$, and therefore itself compact.

As mentioned in the introduction, the idea of the topological hypothesis is to relate a phase transition at inverse temperature~$\beta$ with a change in the topology of~$M_\Lambda(u)$ at the corresponding energy~$u=\psi'(\beta)$, for~$|\Lambda|\to\infty$. In its strongest form, it asserts that this change in topology is apparent in a very coarse topological invariant, namely the Euler characteristic.

Recall that if a topological space~$M$ is (of the homotopy type of) a finite CW-complex, then its {\em Euler characteristic} is defined as
\begin{align*}
\chi(M)\coloneqq\sum_{i\ge 0}(-1)^i \, |\{i-\text{dimensional cells of~$M$}\}|\,.
\end{align*}
It is a remarkable fact that this integer does not depend on the cellular structure on~$M$, but only on its homotopy type (see~\cite[Chapter~2]{Hat02}). Note that in the case of a finite discrete space, the Euler characteristic is nothing but the cardinality of the underlying set.

Let us now assume that for each~$u$, the compact space~$M_\Lambda(u)$ has the homotopy type of a finite CW-complex. This assumption is quite natural: for example, it is satisfied whenever~$S$ is a compact manifold and~$H_\Lambda$ a Morse function on the manifold~$\Omega_\Lambda$ (see~\cite{Mil63}). Then, one can define the logarithmic density of the Euler characteristic of~$M_\Lambda(u)$ as 
\begin{align*}
\sigma(u)\coloneqq\lim_{|\Lambda|\to\infty}\frac{1}{|\Lambda|}\log|\chi(M_\Lambda(u))|\,.
\end{align*}

We are finally ready to formulate precisely the topological hypothesis.

\begin{hypo}
\label{th}
There is a phase transition at inverse temperature~$\beta=s'(u)$ if and only if the function~$\sigma$ is not smooth at~$u=\psi'(\beta)$.
\end{hypo}

This statement corresponds to Conjectures~V.1 and~VII.2 of~\cite{Kas08a}, and can be thought of as the strongest among the many incarnations of the topological hypothesis. The aim of the present note is to study its validity for a wide class of {\em discrete\/} lattice spin models.

\section{Equivalence of ensembles for lattice models}
\label{sec:ens_equiv}

In this section, we focus our attention on lattice models with Hamiltonian of a specific type, namely translation invariant and absolutely summable (Section~\ref{sub:gas}). For these models, thermodynamic equivalence of ensembles has been established in full mathematical rigour. In Section~\ref{sub:LPS}, we briefly recall these classical results, which will play a crucial role in Section~\ref{sec:main}. 

\subsection{Lattice spin models}
\label{sub:gas}

Let us start by recalling the general setting of lattice spin models, referring to~\cite{Geo11} for a more complete and formal description.

Let the spin space~$S$ be a compact Hausdorff space endowed with its Borel~$\sigma$-algebra and a finite measure. For any finite subset~$\Lambda$ of~$\Z^d$, we shall write~$(\Omega_\Lambda,\mathcal{F}_\Lambda,\rho_\Lambda)$ for the corresponding product measured space, and denote by~$\omega=(\omega_x)_{x\in\Lambda}$ the elements of~$\Omega_\Lambda=S^\Lambda$.

Fix an {\em interaction potential\/}~$\Phi=\{\Phi_A\}$, i.e. an~$\mathcal{F}_A$-measurable function~$\Phi_A\colon\Omega_A\to\R$ for each non-empty finite subset~$A$ of~$\Z^d$. We will assume {\em translation invariance\/} of this (interaction) potential, a fact formalised by the equality
\[
\vartheta_x\Phi_A=\Phi_{\vartheta_x A}\colon\Omega_{\vartheta_x A}\to\R
\]
for all~$x\in\Z^d$ and finite~$A\subset\Z^d$, where~$\vartheta_x A=\{y+x\colon y\in A\}$ and~$\vartheta_x\Phi_A(\omega)=\Phi_A(\vartheta_x\omega)$ with~$(\vartheta_x\omega)_y=\omega_{y-x}$ for~$\omega\in\Omega_{\vartheta_x A}$ and~$y\in A$. We will also assume this potential to be {\em absolutely summable\/}, i.e. to satisfy
\begin{align*}
 \|\Phi\| &\coloneqq \sum_{A\ni 0}\|\Phi_A\|< \infty\,,
\end{align*}
where~$\|\Phi_A\|=\sup_{\omega\in\Omega_A}|\Phi_A(\omega)|$. The associated Hamiltonian~$H_\Lambda\colon\Omega_\Lambda\to\R$ is defined by
\[
H_\Lambda(\omega)=\sum_{A\subset\Lambda}\Phi_A(\omega_A)\,,
\]
where~$\omega_A$ denotes the restriction of~$\omega\in\Omega_\Lambda$ to~$\Omega_A$. In this setting, the quantity~$h_\Lambda(\omega)\coloneqq\frac{H_\Lambda(\omega)}{|\Lambda|}$ is called the {\em energy per site\/} of~$\omega\in\Omega_\Lambda$.

Let us illustrate these concepts with a classical example.

\begin{ex}
\label{ex:Ising}
Consider the spin set~$S=\{-1,1\}$ endowed with the discrete topology and the counting measure. Fix a family of real {\em coupling constants\/}~$(J_{x,y})$ indexed by~$\{x,y\}\subset\Z^d$ with~$x\neq y$ together with a real-valued {\em magnetic field\/}~$(h_x)_{x\in\Z^d}$. Define the potential~$\Phi=\{\Phi_A\}$ by
\[
\Phi_A(\omega)=\left\{\begin{array}{lll}
     -J_{x,y}\,\omega_x\omega_y & \text{for~$A=\{x,y\}\subset\Z^d$ with~$x\neq y$},\\
     -h_x\,\omega_x & \text{for~$A=\{x\}\subset\Z^d$,}\\
     0 & \text{else.}
        \end{array}\right.
\]
This potential is translation invariant if and only if~$J_{x,y}=J_{0,y-x}$ and~$h_x=h_0$ for all~$x,y\in\Z^d$, and absolutely summable exactly when~$\sum_{x\in\Z^d}|J_{0,x}|$ is finite. The associated Hamiltonian is
\[
H_\Lambda(\omega)=-\sum_{\{x,y\}\subset\Lambda,\,x\neq y}J_{x,y}\,\omega_x\omega_y\,-\,\sum_{x\in\Lambda}h_x\,\omega_x\,.
\]
The resulting model is the celebrated {\em Ising model\/} on~$\Z^d$. It is called {\em ferromagnetic\/} if~$J_{x,y}\ge 0$ for all~$x,y\in\Z^d$, {\em finite-range\/} if there exists~$R>0$ such that~$J_{x,y}=0$ for all~$x,y\in\Z^d$ with~$|x-y|>R$, and {\em nearest neighbour\/} if it is finite-range with~$R=1$.
\end{ex}

\subsection{Equivalence of ensembles}
\label{sub:LPS}
We now state in a precise way the equivalence of ensembles in the general setting of Section~\ref{sub:gas}. These type of results are classical, going back to the early days of rigorous statistical mechanics~\cite{Rue68,Lan73}.

For definiteness, let~$\Lambda_n$ denote the hypercube~$[-n,n]^d\cap\mathbb{Z}^d$. We use the shorthand notation~$(\Omega_n,\mathcal{F}_n,\rho_n)$ for the corresponding sequence of measured spaces and~$H_n(\omega),h_n(\omega)$ for the energy and energy per site of~$\omega\in\Omega_n$.
Since the potential is translation invariant and absolutely summable, all the maps~$h_n$ take values in the compact interval~$I:=[-\|\Phi\|,\|\Phi\|]$.

Let~$\mathbb{M}_n$ denote the finite measure on the Borel sets of~$I$ given by~$\mathbb{M}_n\coloneqq \rho_n\circ h_n^{-1}$. In other words, we set
\[
\mathbb{M}_n(B) = \rho_n\left\{\omega\in\Omega_n\colon \frac{H_n(\omega)}{|\Lambda_n|}\in B\right\}
\]
for any Borel subset~$B$ of~$I$.

\begin{prop}
\label{prop:equiv}
\begin{enumerate}[(i)]
\item For any interval~$B\subset I$, the limit
\[
m(B)\coloneqq \lim_{n\to\infty}\frac{1}{|\Lambda_n|}\log(\mathbb{M}_n(B))
\]
exists in~$\overline{\mathbb{R}}\coloneqq\R\cup\{-\infty,\infty\}$.
\item For all~$u\in I$, the limit
\[
s(u)\coloneqq\lim_{r\to 0}m\left(\left(u-r,u+r\right)\right)
\]
exists, defining a concave function~$s\colon I\to\overline{\mathbb{R}}$.
\item For any interval~$B\subset I$, we have~$m(B) = \sup_{x\in B}s(x)$.
\item The pressure~$\psi$ and the entropy~$s$ are Legendre-Fenchel duals.
\end{enumerate}
\end{prop}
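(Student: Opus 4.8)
The plan is to deduce all four statements from a single approximate superadditivity estimate for the interval masses~$\mathbb{M}_n$, following the classical Ruelle--Lanford approach, and then to connect the microcanonical and canonical descriptions by a Laplace--Varadhan computation. First I would fix~$n$ and tile a much larger cube~$\Lambda_N$ by a regular grid of disjoint translates~$B_1,\dots,B_k$ of~$\Lambda_n$, discarding the cubes that meet a thin boundary layer of~$\Lambda_N$. Translation invariance identifies each~$H_{B_i}$ with~$H_n$, while absolute summability bounds every potential term~$\Phi_A$ whose support straddles two blocks or reaches the discarded layer; this yields
\[
\Bigl| H_N(\omega)-\sum_{i=1}^{k} H_{B_i}(\omega_{B_i})\Bigr|\le\varepsilon_n\,|\Lambda_N|\,,\qquad \varepsilon_n\xrightarrow[n\to\infty]{}0\,,
\]
the interface contributions being~$o(|\Lambda_N|)$ precisely because~$\sum_{A\ni 0}\|\Phi_A\|<\infty$.

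Combining this with the product structure of~$\rho_N$ and distributing two target energies~$u_1,u_2\in I$ over fractions~$\approx\lambda$ and~$\approx 1-\lambda$ of the blocks, one sees that if every block has~$h_{B_i}$ near its target, then~$h_N$ lies near~$\lambda u_1+(1-\lambda)u_2$. Writing~$g_n(u,\delta):=|\Lambda_n|^{-1}\log\mathbb{M}_n\bigl((u-\delta,u+\delta)\bigr)$, this reads
\[
g_N\bigl(\lambda u_1+(1-\lambda)u_2,\ \delta+\varepsilon_n\bigr)\ge\lambda\,g_n(u_1,\delta)+(1-\lambda)\,g_n(u_2,\delta)-o(1)\,.
\]
Taking~$u_1=u_2=u$ gives a Fekete-type superadditivity, so~$\lim_n g_n(u,\delta)$ exists in~$\overline{\mathbb{R}}$; letting~$\delta\to0$ defines~$s(u)$ and proves the existence in~(ii), while the two-energy version passes to the limit to give~$s(\lambda u_1+(1-\lambda)u_2)\ge\lambda s(u_1)+(1-\lambda)s(u_2)$, i.e. concavity.

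For a general interval~$B$ I would sandwich~$\mathbb{M}_n(B)$: the lower bound~$\mathbb{M}_n(B)\ge\mathbb{M}_n\bigl((x-\delta,x+\delta)\bigr)$ for interior~$x$ gives~$\liminf\ge\sup_{x\in B}s(x)$, and covering~$\overline B$ by finitely many symmetric intervals, together with~$|\Lambda_n|^{-1}\log\sum_j a^{(n)}_j\to\max_j\lim_n|\Lambda_n|^{-1}\log a^{(n)}_j$ and continuity of the concave~$s$ on the interior of~$I$, yields the matching upper bound. Hence~$m(B)=\sup_{x\in B}s(x)$, which proves~(iii) and establishes the limit in~(i); as a by-product~$s(u)=\limsup_{x\to u}s(x)$, so~$s$ is upper semicontinuous.

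Since~$h_n$ pushes~$\rho_n$ forward to~$\mathbb{M}_n$, we have~$\int_{\Omega_n}e^{-\beta H_n}\,d\rho_n=\int_I e^{-\beta|\Lambda_n|u}\,d\mathbb{M}_n(u)$, an integral of a bounded continuous function over the compact interval~$I$; feeding~(iii) into a standard Laplace/Varadhan argument gives~$|\Lambda_n|^{-1}\log\int_I e^{-\beta|\Lambda_n|u}\,d\mathbb{M}_n(u)\to\sup_{u\in I}\{-\beta u+s(u)\}$, so~$\psi(\beta)$ exists and equals~$-\sup_u\{-\beta u+s(u)\}=\inf_u\{\beta u-s(u)\}=s^*(\beta)$. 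As~$s$ is concave and upper semicontinuous, Fenchel--Moreau biconjugation yields~$\psi^*=s^{**}=s$, so~$\psi$ and~$s$ are Legendre--Fenchel duals, proving~(iv). The one delicate point is the geometric bookkeeping of the first two steps: the cubes do not tile exactly, so the boundary layer, the interface terms, and the widening~$\delta\mapsto\delta+\varepsilon_n$ of the energy window must all be absorbed into errors that vanish after the iterated limits. This is exactly where absolute summability is indispensable, whereas the Fekete, Varadhan, and Fenchel--Moreau steps are soft.
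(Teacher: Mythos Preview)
Your sketch is a correct outline of the classical Ruelle--Lanford superadditivity argument combined with a Laplace--Varadhan computation and Fenchel--Moreau biconjugation. Note, however, that the paper does not give its own proof of this proposition: it treats the result as classical and simply refers the reader to Ruelle~\cite{Rue68}, Lanford~\cite{Lan73}, Martin-L\"of~\cite{ML79}, and Lewis--Pfister--Sullivan~\cite{Lew94,Lew95}. The argument you outline is precisely the one contained in those references, so your approach is not different from the paper's---it \emph{is} the proof the paper defers to. One small point worth tightening if you write this out in full: your upper bound for~$m(B)$ uses ``continuity of the concave~$s$ on the interior of~$I$'', but~$s$ may equal~$-\infty$ off its effective domain; the clean statement is that a concave~$\overline{\R}$-valued function is upper semicontinuous and continuous on the interior of its effective domain, which is what the covering argument actually needs.
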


As mentioned above, these results have their origins in the pioneering work of Ruelle~\cite{Rue68} and Lanford~\cite{Lan73}. In the case of discrete spin models (which is the only case we will use), these statements can be found in Sections~2 and~3 of~\cite{ML79}. In the (perhaps too) general setting of Section~\ref{sub:gas}, they follow from Corollary~3.1, Corollary~5.1 and Lemma~5.2 of~\cite{Lew94} (see also~\cite{Lew95}).

\section{The topological hypothesis for discrete spin models}\label{sec:main}

In this section, we state and prove our main results which deal with {\em discrete\/} spin models. We begin in Section~\ref{sub:sigma} by showing that the function~$\sigma$ coincides with the entropy~$s$ for positive temperatures. In Section~\ref{sub:conv}, we prove that under some hypothesis on the potential, the negative\footnote{Recall the unconventional sign in our definition of the pressure.} of the pressure is strongly convex. Section~\ref{sub:main} contains our main result, which can be considered as some modified version of the topological hypothesis valid for a wide class of discrete models. Finally, in Section~\ref{sub:Ising}, we show that the original topological hypothesis (Hypothesis~\ref{th}) holds for the ferromagnetic nearest neighbour Ising model on~$\Z^d$, with the possible exception of dimensions~$3$ and~$4$.

Throughout this section, we assume that the spin space~$S$ is finite, endowed with the discrete topology and the counting measure.

\subsection{The function~$\sigma$ and the entropy}
\label{sub:sigma}

We begin this section with an easy but fundamental result: while~$s$ is defined as Boltzmann's surface entropy, the function~$\sigma$ can be understood as Gibbs' volume entropy in the case of discrete models.

To make this statement precise, let us assume that the effective domain of~$s$, defined as~$\{u\in I\colon s(u)>-\infty\}$, consists of a non-empty open interval~$(a,c)\subset I$. (In degenerate cases, it could be reduced to a point.) Let us also denote by~$b\in(a,c)$ a real number where the concave function~$s$ reaches its maximum.

\begin{lem}
\label{lemma:sigma}
The logarithmic density of the Euler characteristic~$\sigma(u)$ exists for all~$u>a$; it coincides with~$s(u)$ for~$u\in(a,b]$ and is equal to~$s(b)=\log(|S|)$ for~$u\ge b$ (see Figure~\ref{fig:3}).
\end{lem}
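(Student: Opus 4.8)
The plan is to deduce the statement almost entirely from Proposition~\ref{prop:equiv}, the only genuinely new input being the observation recorded in Section~\ref{sub:TH} that the Euler characteristic of a finite discrete space equals its cardinality. First I would rewrite the sublevel set as
\[
M_n(u)=\{\omega\in\Omega_n\colon h_n(\omega)\le u\}\,,
\]
so that, since $h_n$ takes values in $I=[-\|\Phi\|,\|\Phi\|]$ and $\rho_n$ is the counting measure,
\[
|\chi(M_n(u))|=|M_n(u)|=\mathbb{M}_n\big([-\|\Phi\|,u]\big)\,.
\]
Taking logarithms and passing to the limit, part~(i) of Proposition~\ref{prop:equiv} guarantees that
\[
\sigma(u)=\lim_{n\to\infty}\frac{1}{|\Lambda_n|}\log\mathbb{M}_n\big([-\|\Phi\|,u]\big)=m\big([-\|\Phi\|,u]\big)
\]
exists, and part~(iii) identifies it with the supremum of $s(x)$ over $x\in I$ with $x\le u$ (equivalently, over the effective domain $(a,c)$, since $s\equiv-\infty$ elsewhere).

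Second, I would evaluate this supremum using the concavity of $s$ together with the fact that its maximum is attained at $b$. For $u>a$ the interval $(a,u]$ meets $(a,c)$, so the supremum is finite and $\sigma(u)$ genuinely exists, which is exactly why the statement is restricted to $u>a$. Since a concave function is non-decreasing on $(a,b]$, for $u\in(a,b]$ the supremum $\sup_{x\le u}s(x)$ is attained at the right endpoint, yielding $\sigma(u)=s(u)$. For $u\ge b$ the maximizer $b$ lies in $[-\|\Phi\|,u]$, so the supremum equals the global maximum $s(b)$, yielding $\sigma(u)=s(b)$.

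Finally, I would identify $s(b)=\log(|S|)$ by feeding the same chain of equalities into the full interval $B=I$: there $\mathbb{M}_n(I)=|\Omega_n|=|S|^{|\Lambda_n|}$, whence $m(I)=\log(|S|)$, while part~(iii) gives $m(I)=\sup_{x\in I}s(x)=s(b)$. The whole argument is thus essentially bookkeeping layered on top of Proposition~\ref{prop:equiv}; the main conceptual step is the passage from the topological object $\chi(M_n(u))$ to the purely combinatorial count $\mathbb{M}_n([-\|\Phi\|,u])$, and the only point requiring mild care is the behaviour at the left endpoint $u=a$, where $M_n(u)$ may be empty and the density degenerates to $-\infty$.
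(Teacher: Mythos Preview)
Your proposal is correct and follows essentially the same route as the paper's proof: identify $\chi(M_n(u))$ with the count $\mathbb{M}_n([-\|\Phi\|,u])$, invoke Proposition~\ref{prop:equiv} to get $\sigma(u)=\sup_{x\le u}s(x)$, and then read off the two cases from the concavity of $s$. Your derivation of $s(b)=\log|S|$ via $m(I)$ is the same computation the paper performs when it notes that $\sigma(u)=\log|S|$ for $u\ge\|\Phi\|$.
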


\begin{figure}
\centering
\includegraphics[width=.6\textwidth]{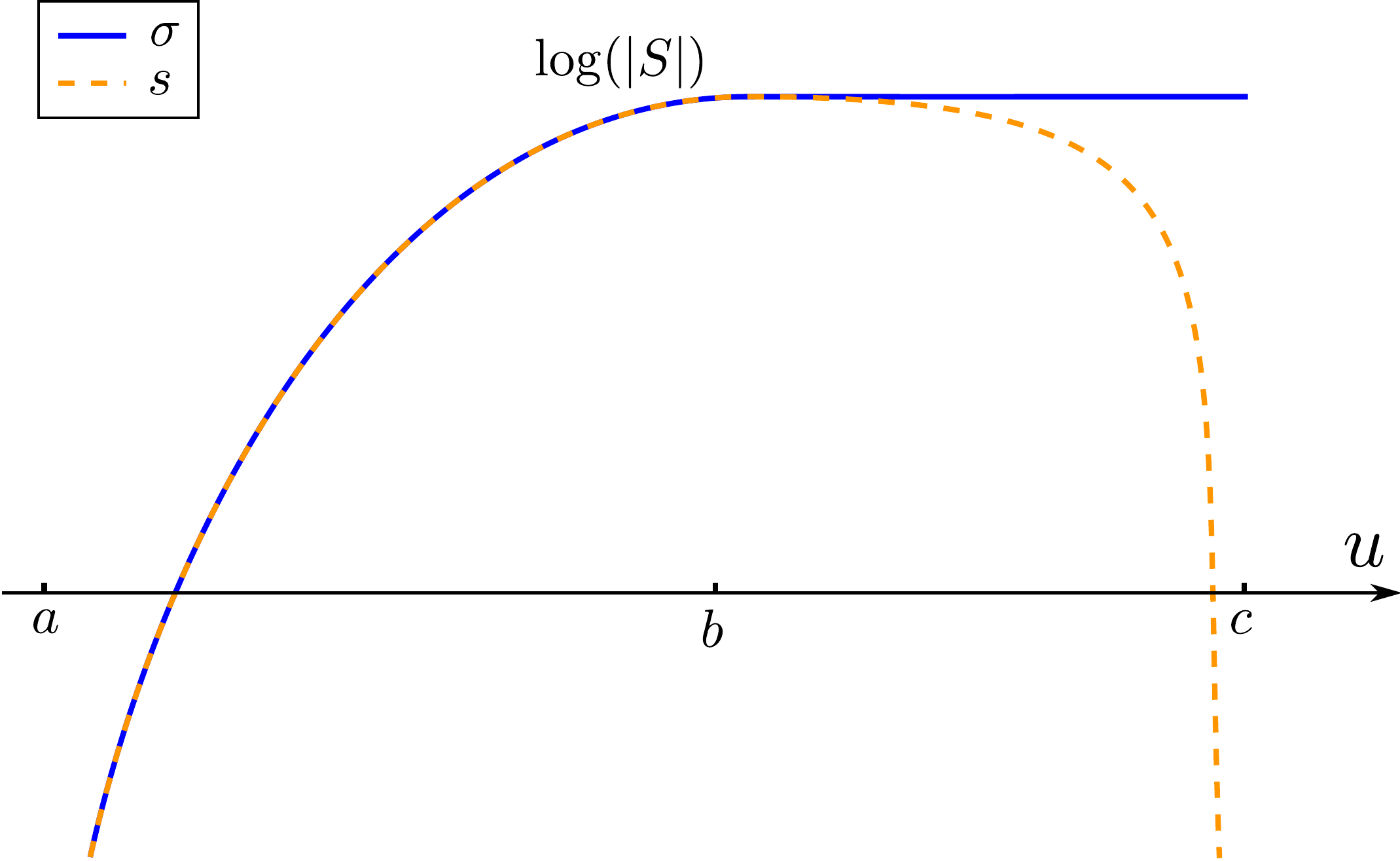}
\caption{Plot of the logarithmic density of the Euler characteristic~$\sigma$ and of the entropy~$s$. The intervals~$(a,b)$ and~$(b,c)$ correspond to positive and negative temperatures, respectively.}
\label{fig:3}
\end{figure}
 
\begin{proof}
Recall that the function~$\sigma$ is defined by~$\sigma(u)=\lim_{n\to\infty}\frac{1}{|\Lambda_n|}\log|\chi(M_n(u))|$, where
\[
M_n(u)=\{\omega\in\Omega_n\colon H_n(\omega)\le |\Lambda_n|\,u\}=h_n^{-1}((-\infty,u])=h_n^{-1}([-\|\Phi\|,u])\,.
\]
Since~$M_n(u)$ is finite and discrete, its Euler characteristic is simply its cardinality. The measure~$\rho_n$ being the counting measure, we obtain
\[
\chi(M_n(u))=|h_n^{-1}([-\|\Phi\|,u])|=\rho_n\circ h_n^{-1}([-\|\Phi\|,u])=\mathbb{M}_n([-\|\Phi\|,u])\,.
\]
By Proposition~\ref{prop:equiv}, we have
\[
\sigma(u)=\lim_{n\to\infty}\frac{1}{|\Lambda_n|}\log\mathbb{M}_n([-\|\Phi\|,u])=m([-\|\Phi\|,u])=\sup_{x\in[-\|\Phi\|,u]}s(x)\in\overline{\R}
\]
for all~$u\in I=[-\|\Phi\|,\|\Phi\|]$. Since~$s$ is concave and finite on~$(a,c)$ and reaches its maximum at~$b\in(a,c)$, it follows that~$\sigma(u)=s(u)$ for all~$u\in(a,b]$ and~$\sigma(u)=s(b)$ for~$u\ge b$. The definition of~$\sigma$ implies that~$\sigma(u)=\log(|S|)$ for~$u\ge\|\Phi\|$, concluding the proof.
\end{proof}

Although very elementary, this observation is already a significant step towards the topological hypothesis for discrete models. Indeed, as the function~$\sigma$ coincides with the entropy on the interval~$(a,b)$, it is the Legendre-Fenchel dual of the pressure~$\psi$ restricted to positive temperatures. Hence, a phase transition at some inverse temperature~$\beta_c>0$ is likely to correspond to a non-smooth point~$u_c\in(a,b)$ of~$\sigma=s$, and vice-versa.

However, the situation is not as simple in general. As an easy counterexample, consider the pressure given by~$\psi(\beta)=u_c\beta-\frac{3}{4}\left|\beta-\beta_c\right|^{4/3}$. This function is not twice differentiable at~$\beta_c$, while its Legendre-Fenchel dual~$\sigma$ satisfies~$\sigma'(u)=\beta_c-(u-u_c)^3$, and is therefore smooth (with~$\sigma''(u_c)=0$).
The reverse phenomenon could a priori also happen, namely the existence of a non-smooth point~$u_c$ of~$\sigma$ that is not reflected by any phase transition, but only by the second derivative of~$\psi$ vanishing at the corresponding~$\beta_c$.

Therefore, more work is required to prove the topological hypothesis for discrete models. This is the aim of the next section.

\subsection{Strong convexity of the pressure}
\label{sub:conv}

The negative of the pressure as defined in Section~\ref{sub:equiv} is the limit of convex functions, so it is always convex. For some general class of models, it can be shown to be strictly convex (see~\cite{GR71} and~\cite[Corollary~16.15]{Geo11}). Unfortunately, this does not imply that~$\psi''(\beta)$ never vanishes when defined, a condition needed for our main result to hold. For this, we need the notion of ``strong convexity''.

Recall that a map~$f\colon(a,b)\to\R$ is {\em strongly convex\/} with parameter~$c>0$ if
\begin{equation*}
f(tx+(1-t)y)\le tf(x)+(1-t)f(y)-\frac{c}{2}t(1-t)|x-y|^2
\end{equation*}
for all~$x,y\in(a,b)$ and~$t\in[0,1]$. Note that this condition is equivalent to the function~$g\colon(a,b)\to\R$ defined by~$g(x)=f(x)-\frac{c}{2}x^2$ being convex. In particular, this implies the inequality~$f''(x)\ge c>0$ for all~$x\in(a,b)$ such that~$f''(x)$ exists.

To ensure that~$-\psi$ is strongly convex, we will require the potential~$\Phi=\{\Phi_A\}$ to be {\em non-constant\/}, meaning that there exists~$A\subset\Z^d$ with~$\Phi_A$ non-constant. This condition is clearly necessary: if all~$\Phi_A$ are constant, then the pressure is an affine function (given by~$-\psi(\beta)=\log(|S|)-\beta\sum_{A\ni 0}\Phi_A$) and therefore not strongly convex.

We will also require the potential to be {\em positively correlated\/}, in the sense that
\[
\cov_{\Lambda,\beta}(\Phi_A,\Phi_B)\coloneqq\left<\Phi_A\Phi_B\right>_{\Lambda,\beta}-\left<\Phi_A\right>_{\Lambda,\beta}\left<\Phi_B\right>_{\Lambda,\beta}\ge 0
\]
for all~$A,B\subset\Lambda$ and all~$\beta>0$. Here, we use the customary notation~$\left<f\right>_{\Lambda,\beta}$ for the expected value of the function~$f\colon\Omega_\Lambda\to\R$ with respect to the Gibbs distribution~$\mu_{\Lambda,\beta}$ on~$\Omega_\Lambda$, that is
\begin{align*}
\left<f\right>_{\Lambda,\beta}&\coloneqq\sum_{\omega\in\Omega_\Lambda}f(\omega)\frac{e^{-\beta H_\Lambda(\omega)}}{Z_{\Lambda,\beta}}\, , &Z_{\Lambda,\beta}\coloneqq\sum_{\omega\in\Omega_\Lambda}e^{-\beta H_\Lambda(\omega)}\,.
\end{align*}
Also, we make a slight abuse of notation and use the symbol~$\Phi_A$ both for the map~$\Phi_A\colon\Omega_A\to\R$ and for its extension~$\Omega_\Lambda\to\R$ given by~$\omega\mapsto\Phi_A(\omega_A)$, where~$\omega_A$ denotes the restriction of~$\omega\in\Omega_\Lambda$ to~$\Omega_A$.

Let us illustrate this condition with an example.

\begin{ex}
\label{ex:Grifiths}
Fix a positive integer~$k$ and set~$S=\{-k,-k+2,\dots,k-2,k\}$. Let~$\Phi=\{\Phi_A\}$ be the potential given by~$\Phi_A=-J_A\sigma_A$, where~$\sigma_A(\omega)=\prod_{x\in A}\omega_x$ and~$J_A$ is a non-negative real number. Then, for any~$A,B\subset\Lambda$ and~$\beta>0$, we have
\[
\cov_{\Lambda,\beta}(\Phi_A,\Phi_B)=\left<\Phi_A\Phi_B\right>_{\Lambda,\beta}-\left<\Phi_A\right>_{\Lambda,\beta}\left<\Phi_B\right>_{\Lambda,\beta}=J_AJ_B\left(\left<\sigma_A\sigma_B\right>_{\Lambda,\beta}-\left<\sigma_A\right>_{\Lambda,\beta}\left<\sigma_B\right>_{\Lambda,\beta}\right)\ge 0
\] 
by Griffiths' second inequality~\cite{Gri69}. Therefore, this potential is positively correlated. This holds in particular for the ferromagnetic Ising model (with~$h\ge 0$), which corresponds to the case~$k=1$ and~$J_A=0$ for~$|A|>2$.
\end{ex}

Let us quickly mention other natural classes of examples. If~$S$ is a finite abelian group and~$-\Phi_A\colon\Omega_A\to\R$ is a positive definite function for all~$A\subset\Z^d$, then the potential~$\Phi=\{\Phi_A\}$ is positively correlated by Ginibre's inequality, see~\cite[Example~4]{Gin70}. (Note that the case~$S=\Z_2$ and~$-\Phi_A=J_A\sigma_A$ with~$J_A\ge 0$ once again corresponds to the Ising model.) Also, if~$S$ is a finite distributive lattice and all the maps~$\Phi_A$ are ``submodular'' and monotone increasing (or all monotone decreasing), then~$\Phi=\{\Phi_A\}$ is positively correlated by the FKG inequality~\cite{FKG71}.

\medskip

We are ready to state the main result of this section.

\begin{prop}
\label{prop:conv}
Consider a lattice spin model with finite spin space endowed with the counting measure. Assume that the potential is translation invariant, absolutely summable, non-constant and positively correlated. Then, for any bounded interval~$(a,b)\subset(0,\infty)$, there exists~$c>0$ such that~$-\psi\colon(a,b)\to\R$ is strongly convex with parameter~$c$. In particular, the second derivative of~$\psi$ is strictly negative whenever defined.
\end{prop}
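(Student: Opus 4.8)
The plan is to express the second derivative of $\psi$ in terms of the variance of the Hamiltonian under the Gibbs distribution, and then use the positive correlation hypothesis to bound this variance from below by a quantity coming from the single non-constant potential term. First I would recall the standard computation relating the pressure to fluctuations: setting $p_n(\beta)=\frac{1}{|\Lambda_n|}\log Z_{\Lambda_n,\beta}$, so that $-\psi(\beta)=\lim_n p_n(\beta)$, one has $p_n'(\beta)=-\frac{1}{|\Lambda_n|}\left<H_{\Lambda_n}\right>_{\Lambda_n,\beta}$ and, crucially,
\[
p_n''(\beta)=\frac{1}{|\Lambda_n|}\var_{\Lambda_n,\beta}(H_{\Lambda_n})=\frac{1}{|\Lambda_n|}\sum_{A,B\subset\Lambda_n}\cov_{\Lambda_n,\beta}(\Phi_A,\Phi_B)\,.
\]
The positive correlation hypothesis says every summand is non-negative. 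Hence I can drop all off-diagonal and most diagonal terms and keep only the contributions of the non-constant potential $\Phi_{A_0}$: using translation invariance, each of the roughly $|\Lambda_n|$ translates $\Phi_{\vartheta_x A_0}$ contributes $\var_{\Lambda_n,\beta}(\Phi_{\vartheta_x A_0})$, so $|\Lambda_n|^{-1}\var_{\Lambda_n,\beta}(H_{\Lambda_n})$ stays bounded below by (essentially) a single variance $\var_{\Lambda_n,\beta}(\Phi_{A_0})$.

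The key quantitative point is then to bound this single variance from below uniformly in $n$ and uniformly for $\beta$ in a compact subinterval $[a,b]\subset(0,\infty)$. Since $\Phi_{A_0}$ is non-constant and depends only on finitely many coordinates $\omega_{A_0}$, there exist two configurations on $A_0$ taking distinct values of $\Phi_{A_0}$; the Gibbs weight assigns a strictly positive probability to each cylinder, and these probabilities are bounded below by a constant depending only on $\|\Phi\|$, $|A_0|$, $|S|$, and $b$ (the worst energy cost of flipping finitely many spins is controlled by absolute summability). This forces $\var_{\Lambda_n,\beta}(\Phi_{A_0})\ge 2c>0$ uniformly, giving $p_n''(\beta)\ge 2c$ for all $n$ and all $\beta\in[a,b]$.

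The transition from this uniform lower bound on $p_n''$ to strong convexity of the limit $-\psi$ is where I would be careful, and I expect it to be the main obstacle. The functions $p_n$ are smooth and convex, and $p_n''\ge 2c$ means each $g_n(\beta):=p_n(\beta)-c\beta^2$ is convex on $[a,b]$; since a pointwise limit of convex functions is convex, $-\psi(\beta)-c\beta^2=\lim_n g_n(\beta)$ is convex, which is exactly the statement that $-\psi$ is strongly convex with parameter $2c$ on $(a,b)$. The subtlety is ensuring the limit defining $-\psi$ genuinely holds pointwise on the interval (guaranteed by Proposition~\ref{prop:equiv}) and that the uniform bound on $p_n''$ really is independent of $n$; the $n$-independence of the cylinder-probability lower bound is the delicate estimate, since one must check that conditioning on the rest of the lattice cannot drive the probability of a local configuration to zero as $\Lambda_n$ grows, which again follows from absolute summability bounding the interaction energy between $A_0$ and its complement. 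The final clause about $\psi''$ being strictly negative whenever defined is then immediate, as strong convexity with parameter $2c$ forces $(-\psi)''\ge 2c>0$ at every point of differentiability.
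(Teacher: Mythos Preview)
Your proposal is correct and follows essentially the same route as the paper: compute $-\psi_\Lambda''(\beta)=|\Lambda|^{-1}\var_{\Lambda,\beta}(H_\Lambda)$, use positive correlation to drop cross-covariances, use translation invariance to reduce to a single $\var_{\Lambda,\beta}(\Phi_{A_0})$, bound this variance below uniformly in $\Lambda$ by controlling the interaction energy between $A_0$ and its complement via $|A_0|\|\Phi\|$ (this is exactly Lemma~\ref{lem:var}), and finally pass the strong convexity to the limit by the convexity of $p_n(\beta)-c\beta^2$. The only point to tighten is your count of translates: the paper makes this precise via $|\{A\subset\Lambda: A=A_0+x\}|/|\Lambda|\ge 1/|A_0|$, which gives the clean bound $-\psi_\Lambda''(\beta)\ge \var_{\Lambda,\beta}(\Phi_{A_0})/|A_0|$.
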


We will need one preliminary result.

\begin{lem}
\label{lem:var}
Let~$S$ be finite and endowed with the counting measure and let~$\Phi$ be translation invariant and absolutely summable. If~$A\subset\Z^d$ is such that~$\Phi_A$ is non-constant, then there exists a continuous map~$c\colon[0,\infty)\to(0,\infty)$ such that~$\var_{\Lambda,\beta}(\Phi_A)\ge c(\beta)$ for all~$\beta\ge 0$ and all~$\Lambda$ containing~$A$.
\end{lem}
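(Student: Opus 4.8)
The plan is to show that $\var_{\Lambda,\beta}(\Phi_A)$ is bounded below by a continuous, strictly positive function of $\beta$, uniformly over all finite $\Lambda\supset A$. First I would observe that since $\Phi_A$ depends only on the coordinates in $A$, its variance under the Gibbs measure $\mu_{\Lambda,\beta}$ equals the variance of $\Phi_A$ under the pushforward of $\mu_{\Lambda,\beta}$ to the finite set $\Omega_A=S^A$. Because $S$ is finite, $\Omega_A$ is a fixed finite set, and the pushforward measure is described by finitely many probabilities $p_\Lambda(\eta)=\mu_{\Lambda,\beta}\{\omega\in\Omega_\Lambda\colon\omega_A=\eta\}$ for $\eta\in\Omega_A$. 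Thus the whole problem is reduced to a uniform lower bound on the variance of a fixed nonconstant function on a fixed finite probability space, as the measure ranges over a certain family.

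The key point is a lower bound on these marginal probabilities. I would prove that there exists a continuous map $\delta\colon[0,\infty)\to(0,\infty)$ such that $p_\Lambda(\eta)\ge\delta(\beta)$ for every $\eta\in\Omega_A$, every $\beta\ge 0$, and every $\Lambda\supset A$. To see this, fix a configuration $\omega\in\Omega_\Lambda$ with $\omega_A=\eta$ and compare it to the configurations obtained by altering the spins inside $A$: since $S$ is finite and the potential is absolutely summable, changing the $|A|$ spins in $A$ changes the energy $H_\Lambda$ by at most a constant $K$ depending only on $\|\Phi\|$ and $|A|$ (by translation invariance the relevant interactions $\sum_{B\cap A\neq\emptyset}\|\Phi_B\|$ are bounded independently of $\Lambda$). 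Hence for any two choices of the spins in $A$, the ratio of Boltzmann weights is at least $e^{-\beta K}$, which forces $p_\Lambda(\eta)\ge\frac{1}{|S|^{|A|}}e^{-\beta K}=:\delta(\beta)$, a continuous positive function of $\beta$.

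With this uniform lower bound in hand, I would conclude as follows. Since $\Phi_A$ is nonconstant on $\Omega_A$, there exist $\eta_1,\eta_2\in\Omega_A$ with $\Phi_A(\eta_1)\neq\Phi_A(\eta_2)$. For any probability measure on $\Omega_A$ assigning mass at least $\delta(\beta)$ to both $\eta_1$ and $\eta_2$, the variance of $\Phi_A$ is bounded below in terms of $\delta(\beta)$ and the gap $|\Phi_A(\eta_1)-\Phi_A(\eta_2)|$; concretely, restricting to the two-point contribution gives
\begin{equation*}
\var_{\Lambda,\beta}(\Phi_A)\ge \delta(\beta)^2\,|\Phi_A(\eta_1)-\Phi_A(\eta_2)|^2 =: c(\beta)\,,
\end{equation*}
which is continuous and strictly positive on $[0,\infty)$. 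Taking this $c(\beta)$ proves the lemma.

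The main obstacle is establishing the uniform lower bound on the marginal probabilities $p_\Lambda(\eta)$: the delicate point is that $\Lambda$ grows without bound, so one must verify that the energy cost of flipping the spins inside the fixed region $A$ stays bounded independently of $\Lambda$. This is precisely where absolute summability and translation invariance of the potential are used, ensuring that $\sum_{B\colon B\cap A\neq\emptyset}\|\Phi_B\|$ is finite and independent of $\Lambda$, so that the resulting bound $K$ (and hence $\delta(\beta)$ and $c(\beta)$) does not deteriorate as $|\Lambda|\to\infty$.
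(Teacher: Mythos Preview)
Your proof is correct and follows essentially the same approach as the paper's: both use absolute summability and translation invariance to bound the energy cost of altering spins inside $A$ by $|A|\,\|\Phi\|$, deduce a uniform-in-$\Lambda$ continuous positive lower bound on the Gibbs marginal over $\Omega_A$ (the paper phrases this as $\mu_{\Lambda,\beta}(\Phi_A=\lambda)\ge c_\lambda(\beta)$, you as $p_\Lambda(\eta)\ge\delta(\beta)$), and then convert non-constancy of $\Phi_A$ into a variance lower bound. The only cosmetic difference is the final step: the paper uses $\sum_\lambda(\lambda-\langle\Phi_A\rangle)^2\mu(\Phi_A=\lambda)\ge\tfrac{1}{2}(\lambda_1-\lambda_2)^2\min\{c_{\lambda_1},c_{\lambda_2}\}$, whereas your bound $\var\ge\delta(\beta)^2(\Phi_A(\eta_1)-\Phi_A(\eta_2))^2$ comes from the identity $\var(X)=\tfrac{1}{2}\sum_{\eta,\eta'}p(\eta)p(\eta')(\Phi_A(\eta)-\Phi_A(\eta'))^2$ restricted to the pair $(\eta_1,\eta_2)$.
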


\begin{proof}[Proof of Lemma~\ref{lem:var}]
As a first step, let us show that for any~$\lambda\in\Phi_A(\Omega_A)$ and any~$\Lambda$ containing~$A$, there exists a continuous map~$c_\lambda\colon[0,\infty)\to(0,\infty)$, independent of~$\Lambda$, such that
\[
\mu_{\Lambda,\beta}(\Phi_A=\lambda)\ge c_\lambda(\beta)\tag{$\star$}
\]
for all~$\beta\ge 0$. To check this claim, let us fix~$\overline{\omega}\in\Omega_{\Lambda\setminus A}$ and decompose the Hamiltonian as
\[
H_\Lambda(\omega)=\Phi_A(\omega_A)+\sum_{\substack{B\subset\Lambda\,,B\neq A\\B\cap A\neq \emptyset}}\Phi_B(\omega_B)+\sum_{\substack{C\subset\Lambda\\C\cap A=\emptyset}}\Phi_C(\omega_C)\,.
\]
Since the potential is translation invariant, the second term is bounded by
\[
\Big|\sum_{\substack{B\subset\Lambda\,,B\neq A\\B\cap A\neq \emptyset}}\Phi_B(\omega_B)\Big|\le\sum_{\substack{B\subset\Lambda\\B\cap A\neq\emptyset}}\|\Phi_B\|\le |A|\sum_{B\ni 0}\|\Phi_B\|=|A|\,\|\Phi\|\,,
\]
which is finite since~$\Phi$ is absolutely summable. Therefore, writing~$K(\beta)$ for~$e^{\beta|A|\|\Phi\|}$ and~$c(\overline{\omega})$ for~$e^{-\beta\sum_{C\subset\Lambda\,,C\cap A=\emptyset}\Phi_C(\omega_C)}$, we have the inequalities
\[
e^{-\beta\Phi_A(\omega_A)}\,K(\beta)^{-1}\,c(\overline{\omega})\le e^{-\beta H_\Lambda(\omega)}\le e^{-\beta\Phi_A(\omega_A)}\,K(\beta)\, c(\overline{\omega})\,.
\]
Using the notation~$\Omega_\Lambda^{\overline{\omega}}=\{\omega\in\Omega_\Lambda\,|\,\omega_{\Lambda\setminus A}=\overline{\omega}\}$, it follows that
\begin{align*}
\mu_{\Lambda,\beta}(\Phi_A=\lambda\,|\,\omega_{\Lambda\setminus A}=\overline{\omega})&=\frac{\sum_{\omega\in\Omega_\Lambda^{\overline{\omega}}\,,\Phi_A(\omega_A)=\lambda}e^{-\beta H_\Lambda(\omega)}}{\sum_{\omega\in\Omega_\Lambda^{\overline{\omega}}}e^{-\beta H_\Lambda(\omega)}}\\
&\ge\frac{\sum_{\omega\in\Omega_\Lambda^{\overline{\omega}}\,,\Phi_A(\omega_A)=\lambda}e^{-\beta\lambda}\,K(\beta)^{-1}\,c(\overline{\omega})}{\sum_{\omega\in\Omega_\Lambda^{\overline{\omega}}}e^{-\beta\Phi_A(\omega_A)}\,K(\beta)\,c(\overline{\omega})}\\
&=\frac{|\{\omega\in\Omega_A\,|\,\Phi_A(\omega)=\lambda\}|}{K(\beta)^{2}\sum_{\omega\in\Omega_A}e^{-\beta(\Phi_A(\omega)-\lambda)}}=:c_\lambda(\beta)\,.
\end{align*}
Since the map~$c_\lambda$ defined by the last equality is continuous, positive, and depends neither on~$\Lambda$ nor on~$\overline{\omega}$, the inequality~$(\star)$ follows and the claim is proved.

Since~$\Phi_A\colon\Omega_A\to\R$ is not constant, there exists~$\lambda_1\neq\lambda_2$ in~$\Phi_A(\Omega_A)$. Hence, we have the inequalities
\begin{align*}
\var_{\Lambda,\beta}(\Phi_A)&=\sum_{\omega\in\Omega_\Lambda}\left(\Phi_A(\omega_A)-\left<\Phi_A\right>_{\Lambda,\beta}\right)^2\mu_{\Lambda,\beta}(\omega)\\
&=\sum_{\lambda\in\Phi_A(\Omega_A)}\left(\lambda-\left<\Phi_A\right>_{\Lambda,\beta}\right)^2\mu_{\Lambda,\beta}(\Phi_A(\omega_A)=\lambda)\\
&\stackrel{(\star)}{\ge}\left(\lambda_1-\left<\Phi_A\right>_{\Lambda,\beta}\right)^2 c_{\lambda_1}(\beta)+\left(\lambda_2-\left<\Phi_A\right>_{\Lambda,\beta}\right)^2 c_{\lambda_2}(\beta)\\
&\ge\frac{1}{2}(\lambda_1-\lambda_2)^2\min\{c_{\lambda_1}(\beta),c_{\lambda_2}(\beta)\}=:c(\beta)\,,
\end{align*}
and the lemma is proved.
\end{proof}

\begin{proof}[Proof of Proposition~\ref{prop:conv}] By definition, the pressure is equal to~$\psi(\beta)=\lim_{|\Lambda|\to\infty}\psi_\Lambda(\beta)$, with~$\psi_\Lambda(\beta)=-\frac{1}{|\Lambda|}\log Z_{\Lambda,\beta}$ and~$Z_{\Lambda,\beta}=\sum_{\omega\in\Omega_\Lambda}e^{-\beta H_\Omega(\omega)}$. Direct computations give
\[
-\psi_\Lambda''(\beta)=\frac{1}{|\Lambda|}\left(\left<H_\Lambda^2\right>_{\Lambda,\beta}-\left<H_\Lambda\right>^2_{\Lambda,\beta}\right)=\frac{1}{|\Lambda|}\var_{\Lambda,\beta}(H_\Lambda)=\frac{1}{|\Lambda|}\var_{\Lambda,\beta}\big(\sum_{A\subset\Lambda}\Phi_A\big)\,.
\]
Since the potential is assumed to be positively correlated, we have
\[
\var_{\Lambda,\beta}\big(\sum_{A\subset\Lambda}\Phi_A\big)=\sum_{A\subset\Lambda}\var_{\Lambda,\beta}(\Phi_A)+\sum_{\substack{A,B\subset\Lambda\\A\neq B}}\cov_{\Lambda,\beta}(\Phi_A,\Phi_B)\ge\sum_{A\subset\Lambda}\var_{\Lambda,\beta}(\Phi_A)\,.
\]
By assumption, there exists~$A_0\subset\Z^d$ such that~$\Phi_{A_0}$ is not constant. Assuming that~$\Lambda$ is the union of translated copies of~$A_0$ (i.e. of subsets of~$\Z^d$ of the form~$A_0+x$ with~$x\in\Z^d$), translation invariance of the potential now implies
\[
-\psi''_\Lambda(\beta)\ge\frac{1}{|\Lambda|}\sum_{\substack{A\subset\Lambda\\A=A_0+x}}\var_{\Lambda,\beta}(\Phi_A)=\var_{\Lambda,\beta}(\Phi_{A_0})\frac{|\{A\subset\Lambda\,|\,A=A_0+x\}|}{|\Lambda|}\ge\frac{\var_{\Lambda,\beta}(\Phi_{A_0})}{|A_0|}\,.
\]
By Lemma~\ref{lem:var}, we conclude that there exists a continuous map~$c\colon[0,\infty)\to(0,\infty)$, independent of~$\Lambda$, such that~$-\psi''_\Lambda(\beta)\ge c(\beta)$ for all~$\beta\ge 0$. This implies that the function~$-\psi_\Lambda$ is strongly convex on any bounded interval~$(a,b)\subset[0,\infty)$, with parameter~$\min_{\beta\in[a,b]}c(\beta)>0$. Since this parameter is independent of~$\Lambda$, the same holds true for the limit~$-\psi(\beta)=\lim_{|\Lambda|\to\infty}-\psi_\Lambda(\beta)$. This concludes the proof.
\end{proof}

\subsection{The topological hypothesis for discrete spin models}
\label{sub:main}

We are finally ready to state our main result, whose proof is now straightforward.

\begin{thm}
\label{thm:main}
Consider a lattice spin model with finite spin space endowed with the discrete topology and the counting measure. Assume that the potential is translation invariant, absolutely summable, non-constant and positively correlated, and that the system does not exhibit any first-order phase transition.

Then, there exists~$a<b\in\R$ such that the following statements hold:
\begin{enumerate}[(i)]
\item{The pressure~$\psi\colon(0,\infty)\to\R$ and entropy~$s\colon(a,b)\to\R$ are differentiable and Legendre-Fenchel duals, so~$\psi'$ and~$s'$ are mutually inverse continuous maps.}
\item{The function~$\sigma$ coincides with~$s$ on~$(a,b)$.}
\end{enumerate}
Furthermore, for any~$\beta=s'(u)>0$:
\begin{enumerate}[(i)]
\setcounter{enumi}{2}
\item{The system undergoes a second order phase transition at~$\beta$ if and only if~$\sigma$ is not twice differentiable at~$u$ or~$\sigma''(u)=0$.}
\item{The system undergoes a phase transition of order~$p>2$ at~$\beta$ if and only if~$\sigma$ is~$(p-1)$ but not~$p$ times differentiable at~$u$ and~$\sigma''(u)\neq 0$.}
\end{enumerate}
\end{thm}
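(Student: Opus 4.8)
The plan is to derive all four statements from three facts already established: the concavity of~$s$ together with the Legendre--Fenchel duality of Proposition~\ref{prop:equiv}, the identification~$\sigma=s$ on~$(a,b]$ of Lemma~\ref{lemma:sigma}, and the strong convexity of~$-\psi$ on bounded subintervals of~$(0,\infty)$ from Proposition~\ref{prop:conv}, which in particular forces~$\psi''(\beta)<0$ wherever it exists. For part~(i), the absence of a first-order phase transition means exactly that~$\psi$ is differentiable at every~$\beta>0$; since~$\psi$ is concave, $\psi'$ is then monotone and everywhere defined, hence continuous, so~$\psi\in C^1(0,\infty)$. To see that~$s$ is differentiable on~$(a,b)$, I would argue by contradiction: if~$s$ failed to be differentiable at some~$u_0\in(a,b)$, its superdifferential would be a nondegenerate interval~$[\beta_1,\beta_2]\subset(0,\infty)$ (positivity coming from strict concavity of~$s$, the dual of the differentiable~$\psi$), and every~$\beta\in[\beta_1,\beta_2]$ would satisfy~$\psi'(\beta)=u_0$, forcing~$\psi$ to be affine on~$[\beta_1,\beta_2]$ and contradicting the strict concavity granted by Proposition~\ref{prop:conv}. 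With both functions differentiable and strictly convex/concave, the standard duality shows that~$\psi'\colon(0,\infty)\to(a,b)$ and~$s'\colon(a,b)\to(0,\infty)$ are mutually inverse, and continuity of each follows from monotonicity, so they are mutually inverse homeomorphisms.

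Part~(ii) is then immediate from Lemma~\ref{lemma:sigma}, which gives~$\sigma=s$ on~$(a,b]$ and in particular on~$(a,b)$. For parts~(iii) and~(iv), I fix~$u\in(a,b)$ and the corresponding~$\beta=s'(u)>0$, so that~$u=\psi'(\beta)$, and reduce everything to the single claim
\[
(\dagger)\qquad \psi \text{ is $p$-times differentiable at }\beta \;\Longleftrightarrow\; s \text{ is $p$-times differentiable at }u \text{ and } s''(u)\neq 0\,,
\]
valid for every integer~$p\ge 2$. Granting~$(\dagger)$: a second-order transition at~$\beta$ means~$\psi$ is once but not twice differentiable there, which by~(i) and~$(\dagger)$ with~$p=2$ is equivalent to~$\sigma$ failing to be twice differentiable at~$u$ or satisfying~$\sigma''(u)=0$, giving~(iii); applying~$(\dagger)$ at orders~$p-1$ and~$p$ and combining them (the order-$(p-1)$ half forces~$\sigma''(u)\neq 0$, which deletes the degenerate alternative at order~$p$) yields~(iv). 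The point that makes the left-hand side of~$(\dagger)$ free of any nondegeneracy hypothesis is precisely~$\psi''(\beta)\neq 0$ from Proposition~\ref{prop:conv}.

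It remains to establish~$(\dagger)$, an inverse-function statement for the mutually inverse homeomorphisms~$F=\psi'$ and~$G=s'$. The engine is the pointwise inverse-function principle: if~$G$ is differentiable at~$u$ with~$G'(u)\neq 0$, then~$F$ is differentiable at~$\beta=G(u)$ with~$F'(\beta)=1/G'(u)$, and conversely; for~$p=2$ this reads~$\psi''(\beta)\,s''(u)=1$. The base case~$p=2$ follows directly, using~$\psi''\neq 0$ for the reverse implication, while if~$s''(u)=0$ the difference quotients of~$F$ diverge, so~$\psi$ is then not twice differentiable. For~$p>2$ I would bootstrap: on a neighborhood of~$u$ where~$s''$ is nonzero one has the identity~$\psi''=1/(s''\circ F)$, and differentiating it~$p-2$ further times via the chain rule (Fa\`a di Bruno) transfers the differentiability of~$s$ at~$u$ to that of~$\psi$ at~$\beta$, and symmetrically.

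The main obstacle is exactly this bootstrap. The order of a phase transition is defined through \emph{pointwise} differentiability at a single point, whereas the inverse-function theorem is cleanest for functions that are~$C^k$ on a neighborhood. One must therefore carefully exploit the neighborhood regularity that ``$p$-times differentiable at~$u$'' already encodes, namely~$(p-1)$-fold differentiability near~$u$, verify that~$s''$ stays nonzero on a neighborhood so that~$1/(s''\circ F)$ is well defined and inherits the correct order of differentiability, and treat the degenerate case~$\sigma''(u)=0$ separately. This last asymmetry is exactly the phenomenon exhibited by the counterexample~$\psi(\beta)=u_c\beta-\tfrac{3}{4}|\beta-\beta_c|^{4/3}$ following Lemma~\ref{lemma:sigma}, where a smooth~$\sigma$ with vanishing second derivative coexists with a~$\psi$ that is not twice differentiable.
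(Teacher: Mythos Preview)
Your proposal is correct and follows the same route as the paper: derive~(i) from Proposition~\ref{prop:equiv}, Proposition~\ref{prop:conv}, and the no-first-order hypothesis; obtain~(ii) from Lemma~\ref{lemma:sigma}; and deduce~(iii)--(iv) from the inverse-function principle for the mutually inverse maps~$\psi'$ and~$\sigma'=s'$, using~$\psi''\neq 0$ to remove the degeneracy on the~$\psi$ side. The paper is terser---citing~\cite{WV73} where you argue via the superdifferential, and invoking ``the inverse function theorem'' where you spell out and candidly flag the pointwise bootstrap~$(\dagger)$---and it makes explicit one step you leave implicit, namely identifying~$b=\psi'(0)$ and checking that~$s$ attains its unique maximum there so that Lemma~\ref{lemma:sigma} applies; but the substance is the same.
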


\begin{proof}
Since the potential is translation invariant and absolutely summable, Proposition~\ref{prop:equiv} states that the entropy~$s\colon I\to\overline{\R}$ and the pressure~$\psi\colon\R\to\R$ are Legendre-Fenchel duals. By hypothesis,~$\psi$ is differentiable, hence continuously differentiable since it is concave. By Proposition~\ref{prop:conv}, it is also strictly concave on~$\R$. This implies that its dual~$s=\psi^*$ is (continuously) differentiable on its effective domain~$\{u\in I\colon s(u)>-\infty\}$ which consists of a non-empty open interval~$(a,c)\subset I$ (see e.g.~\cite{WV73}). Therefore, the maps~$\psi'\colon\R\to(a,c)$ and~$s'\colon(a,c)\to\R$ are strictly decreasing continuous functions which are mutual inverses. In particular, the real number~$a$ (resp.~$c$) is nothing but the limit of~$\psi'(\beta)$ as~$\beta$ tends to~$\infty$ (resp. to~$-\infty$). Writing~$b$ for~$\psi'(0)\in(a,c)$, the first point is proved. Note that~$s'(b)=(\psi')^{-1}(b)=0$, so~$s$ has a unique maximum at~$u=b$. We are therefore in the setting of Lemma~\ref{lemma:sigma}, which implies the second point.

As a consequence of points~$(i)$ and~$(ii)$, the continuous maps~$\sigma'=s'\colon(a,b)\to(0,\infty)$ and~$\psi'\colon(0,\infty)\to(a,b)$ are mutual inverses, with~$\psi''(\beta)$ nowhere zero by Proposition~\ref{prop:conv}. This easily implies points~$(iii)$ and~$(iv)$, as we now demonstrate. Fix~$\beta>0$ and set~$u:=s'(\beta)=\sigma'(\beta)\in(a,b)$. If the system undergoes a second order phase transition at~$\beta$, then~$\psi'$ is not differentiable at~$\beta$; since~$\psi'$ and~$\sigma'$ are inverses, either~$\sigma'$ is not differentiable at~$u$ or~$\sigma''(u)$ vanishes. Conversally, if~$\psi'$ is differentiable at~$\beta$, then~$\sigma'$ is differentiable at~$u$ since~$\psi''(\beta)$ does not vanish. Furthermore, the chain rule applied to~$\psi'\circ\sigma'=\mathit{id}$ leads to the equality~$\psi''(\beta)\sigma''(u)=1$, so~$\sigma''(u)$ does not vanish either. This shows point~$(iii)$. Finally, if the system undergoes a phase transition of order~$p>2$ at~$\beta$, then~$\psi'$ is~$(p-2)$ but not~$(p-1)$ times differentiable at~$\beta$ and its derivative does not vanish at~$\beta$; by the inverse function theorem, the function~$\sigma'$ has the same properties. Exchanging the roles of~$\psi'$ and~$\sigma'$ concludes the proof.
\end{proof}

\subsection{The Ising model}
\label{sub:Ising}

As a motivating example, we now apply Theorem~\ref{thm:main} to the ferromagnetic Ising model on~$\mathbb{Z}^d$. Our understanding of this model depends greatly on the dimension, so we shall present the results in the form of a discussion culminating in the main statement: the validity of the original topological hypothesis for the nearest neighbour ferromagnetic Ising model on~$\mathbb{Z}^d$, with the possible exceptions of dimensions~$d=3,4$ (Theorem~\ref{thm:Ising}).

As usual, we shall assume throughout this section that the coupling constants~$(J_{x,y})_{x,y\in\Z^d}$ are translation invariant, absolutely summable and ferromagnetic (recall Example~\ref{ex:Ising}), but also satisfy the following property: for all~$x\in\Z^d$, there exists~$0=x_0,\dots,x_m=x$ such that~$J_{x_0,x_1}\cdots J_{x_{m-1},x_m}>0$. We also fix a magnetic field~$h\in\R$. Note that the pressure~$\psi$ is unchanged when replacing~$h$ with~$-h$, so we can assume~$h\ge 0$ without loss of generality.

The potential corresponding to these coupling constants and magnetic field is translation invariant, absolutely summable and non-constant by assumption, and positively correlated by Example~\ref{ex:Grifiths} (recall that~$J_{x,y}\ge 0$ and~$h\ge 0$). Furthermore, by~\cite[Corollary~2]{Rao17} (see also~\cite{Aiz15}), the four conditions on the coupling constants stated above imply that the model does not undergo a first-order phase transition. Therefore, the hypothesis of Theorem~\ref{thm:main} (and Proposition~\ref{prop:conv}) are satisfied, so~$\psi'\colon(0,\infty)\to(a,b)$ and~$\sigma'\colon(a,b)\to(0,\infty)$ are mutual inverses with~$\psi''$ never vanishing. (For the Ising model, one easily checks that~$b=\psi'(0)=0$.)

We now start the aforementioned case by case discussion.

\subsubsection*{Non-vanishing magnetic field}
Let us first assume that the magnetic field~$h\in\R$ is non-zero. Then, by~\cite[p.~109]{LP68}, the pressure~$\psi$ is analytic on~$(0,\infty)$. Since~$\sigma'$ and~$\psi'$ are inverse with~$\psi''\neq 0$, it follows that~$\sigma$ is analytic on~$(a,0)$. Therefore, Hypothesis~\ref{th} holds (trivially) in this case.

From now on, we assume that the magnetic field is equal to zero.

\subsubsection*{The critical inverse temperature}
For a wide class of Ising models, including the ones under study in this section, there exists a {\em critical inverse temperature\/}~$\beta_c\in[0,\infty]$ so that the {\em spontaneous magnetization\/}~$\left<\sigma_0\right>^+_\beta$ vanishes for~$\beta<\beta_c$ while~$\left<\sigma_0\right>^+_\beta>0$ for~$\beta>\beta_c$. Here,~$\left<\sigma_0\right>^+_\beta$ denotes the expected value of~$\sigma_0$ with respect to the infinite volume Gibbs measure with plus boundary condition (see~\cite{FV17}). 

The pressure~$\psi$ is expected to be analytic on~$(0,\infty)\setminus\{\beta_c\}$, with the {\em specific heat\/}~$-\psi''(\beta)$ exhibiting a special type of singularity at~$\beta_c$ (see e.g.~\cite{FFS92}, and details below). As we shall see, this would imply the validity of the topological hypothesis. More precisely, we could conclude that~$\sigma$ is analytic on~$(a,0)\setminus\{u_c\}$ and not smooth at~$u_c\coloneqq\psi'(\beta_c)$. However, these facts are proven only in some cases, as we now explain.

\subsubsection*{Dimension one}
Let us consider the Ising model on~$\Z$. In the finite-range case, we have~$\beta_c=\infty$ and the pressure is known to be analytic on~$(0,\infty)$ (see e.g.~\cite{Rue68}). It follows that~$\sigma$ is analytic on~$(a,0)$ and Hypothesis~\ref{th} is valid. The same result is expected to hold for coupling constants satisfying~$\sum_{x\in\Z}xJ_{0,x}^2<\infty$ (see~\cite{Rue68}).

In the remaining cases, i.e. when coupling constants decay as~$J_{x,y}\sim |x-y|^{-\alpha}$ with~$1<\alpha\le 2$, the critical inverse temperature is known to be finite and strictly positive~\cite{Dys69,ACCN88}. However, the behavior of~$\psi''$ at this critical point seems unknown, and we cannot conclude that Hypothesis~\ref{th} holds.

\subsubsection*{Dimension two} 
Consider the two-dimensional nearest neighbour Ising model, with coupling constants~$J_1$ and~$J_2$. In a classical work, Onsager~\cite{Ons44} was able to compute the pressure as
\[
-\psi(\beta)=\log 2+\frac{1}{2\pi^2}\int_0^{\pi}\!\!\int_0^{\pi}\log P(\theta_1,\theta_2)d\theta_1d\theta_2\,,
\]
where
\[
P(\theta_1,\theta_2)=
\cosh(2\beta J_1)\cosh(2\beta J_2)-\sinh(2\beta J_1)\cos\theta_1-\sinh(2\beta J_2)\cos\theta_2\,.
\]
This leads to the identification of the critical inverse temperature~$\beta_c$ as the unique positive solution to the equation~$\sinh(2\beta J_1)\sinh(2\beta J_2)=1$. In the case~$J_1=J_2=1$, the solution is given by~$\beta_c=\frac{1}{2}\log(\sqrt{2}+1)$, a value first predicted in~\cite{KW41}.

With the explicit expression above,~$\psi$ is easily shown to be smooth at~$\beta\neq\beta_c$, with the specific heat having a logarithmic singularity at~$\beta_c$. Therefore, the map~$\psi'$ has a singularity of the form~$\psi'(\beta)\sim(\beta-\beta_c)\log|\beta-\beta_c|$ at~$\beta_c$. Since the derivative of~$\psi'$ never vanishes, the inverse map~$\sigma'=(\psi')^{-1}\colon(a,0)=(-J_1-J_2,0)\to(0,\infty)$ is smooth at all~$u\neq u_c=\psi'(\beta_c)=J_1\cosh(2\beta_cJ_2)$, twice differentiable at~$u_c$ (with~$\sigma''(u_c)=0$), but not three times differentiable at~$u_c$. In particular, Hypothesis~\ref{th} holds.

Note that the same analysis can be performed for any biperiodic planar graph (see~\cite{CD13}), and the same conclusion holds. However, the analyticity of~$\psi$ seems unknown in the general (i.e. not nearest neighbour) case.

\subsubsection*{Smoothness of the pressure outside the critical point} 
In the subcritical regime~$\beta<\beta_c$, exponential decay of the two-point correlation functions~$\left<\sigma_0\sigma_x\right>_\beta$ has been established in~\cite{ABF87} for finite-range models (see also~\cite{DT16} for an alternative proof). By~\cite[p.~318]{Leb72}, this implies that the pressure is smooth for all~$\beta<\beta_c$. (Note however that this is not sufficient to conclude that the pressure is analytic.)

In the supercritical regime~$\beta>\beta_c$, exponential decay of the truncated two-point correlation functions~$\left<\sigma_0\sigma_x\right>_\beta-\left<\sigma_0\right>_\beta\left<\sigma_x\right>_\beta$ has been recently proved for finite-range models of dimension~$d\ge 3$, see~\cite{DGR18}. Again, by the argument of~\cite{Leb72}, this shows that the pressure is smooth for all~$\beta>\beta_c$.

In conclusion, the pressure is smooth at~$\beta\neq\beta_c$ for finite-range models in dimension~$d\ge 3$. By Theorem~\ref{thm:main}, this implies that~$\sigma$ is smooth at all~$u\neq u_c=\psi'(\beta_c)$. To show that Hypothesis~\ref{th} holds, it remains to understand the specific heat near the critical point.

\subsubsection*{Specific heat in dimension~$d\ge 3$} 
For nearest neighbour models in dimension~$d>4$, the specific heat~$-\psi''(\beta)$ is known to be uniformly bounded~\cite{Sok79}. As a consequence, since~$\psi'$ and~$\sigma'$ are mutual inverses, the second derivative~$\sigma''(u)$ never vanishes. By Theorem~\ref{thm:main}, it follows that Hypothesis~\ref{th} holds in this case:~$\sigma$ is smooth at all~$u\neq u_c$ and smooth at~$u_c$ if and only if~$\psi$ is smooth at~$\beta_c$.

Note that the specific heat is expected to exhibit a jump discontinuity at~$\beta_c$ (see~\cite[p.~281]{FFS92}). This would imply that~$\sigma$ also has a jump discontinuity at~$u_c$, but no proof of this statement is currently available.

In dimension~$d=4$, the critical exponent~$
\alpha\coloneqq\lim_{\beta\to\beta_c}-\frac{\log|\psi''(\beta)|}{\log|\beta-\beta_c|}$ is known to vanish~\cite{Sok79}. Furthermore, the specific heat is conjectured to exhibit a logarithmic singularity at~$\beta_c$ (see~\cite{FFS92,LM09}). Hypothesis~\ref{th} would then hold, but this has not yet been formally established.

Finally, very little is known in dimension~$d=3$. Numerical experiments~\cite{BCGS94} give the approximative value~$\alpha\approx 0.104$. Having~$-\psi''(\beta)\sim|\beta-\beta_c|^{-\alpha}$ with~$\alpha\approx 0.104$ suggests~$\sigma'(u)\sim-|u-u_c|^{\frac{1}{1-\alpha}}$ with~$\frac{1}{1-\alpha}\approx 1.116$. If rigorously established, this would imply that~$\sigma$ is twice but not three times differentiable at~$u_c$ and would confirm the validity of Hypothesis~\ref{th} is this dimension as well.

\bigskip

As a consequence of the above discussion, we have proved Hypothesis~\ref{th} for the nearest neighbour ferromagnetic Ising model on~$\Z^d$ in all dimensions except~$d=3,4$. More precisely:

\begin{thm}
\label{thm:Ising}
Consider the translation invariant nearest neighbour ferromagnetic Ising model on~$\Z^d$ with non-identically zero coupling constants and arbitrary magnetic field. Then, the pressure~$\psi\colon(0,\infty)\to\R$ is smooth at all~$\beta\neq\beta_c$ and the function~$\sigma\colon(a,0)\to\R$ is smooth at all~$u\neq u_c=\psi'(\beta_c)$. Furthermore,~$\sigma$ is not smooth at~$u_c$ if and only if~$\psi$ is not smooth at~$\beta_c$, with the possible exception of dimensions~$3$ and~$4$.\qed
\end{thm}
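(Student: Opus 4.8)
The plan is to recognize that Theorem~\ref{thm:Ising} is a consolidation of the case-by-case discussion above, organized around a single structural input supplied by Theorem~\ref{thm:main}. First I would verify that the hypotheses of that theorem are met for the model at hand: nearest neighbour coupling makes the potential finite-range, hence absolutely summable; the non-identically-zero assumption together with translation invariance makes it non-constant; ferromagneticity and the reduction to $h\ge 0$ give positive correlation via Griffiths' inequality (Example~\ref{ex:Grifiths}); and the connectivity condition on the couplings rules out first-order transitions by~\cite{Rao17}. With these in place, Theorem~\ref{thm:main} supplies the facts on which everything rests: $\psi'\colon(0,\infty)\to(a,0)$ and $\sigma'\colon(a,0)\to(0,\infty)$ are mutually inverse continuous maps, with $\psi''$ finite and strictly negative wherever defined (here $b=\psi'(0)=0$).

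The second step is to collect the smoothness-away-from-criticality results dimension by dimension. In $d=1$ (finite range) one has $\beta_c=\infty$ and $\psi$ analytic on $(0,\infty)$ by~\cite{Rue68}. In $d=2$ the Onsager formula exhibits $\psi$ as smooth away from $\beta_c$, with a logarithmic singularity of the specific heat at $\beta_c$. For $d\ge 3$ one combines exponential decay of the (truncated) two-point function in the subcritical~\cite{ABF87} and supercritical~\cite{DGR18} regimes with Lebowitz's criterion~\cite{Leb72} to obtain smoothness of $\psi$ on $(0,\infty)\setminus\{\beta_c\}$. Transferring this to $\sigma$ is then immediate: since $\psi'$ and $\sigma'$ are mutual inverses and $\psi''$ never vanishes, the inverse function theorem---equivalently, points~$(iii)$ and~$(iv)$ of Theorem~\ref{thm:main}---shows that $\sigma$ is smooth at $u$ precisely when $\psi$ is smooth at the corresponding $\beta$. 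Hence $\sigma$ is smooth on $(a,0)\setminus\{u_c\}$ in every dimension, with $u_c=\psi'(\beta_c)$.

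The remaining and genuinely delicate point is the equivalence ``$\sigma$ not smooth at $u_c$ iff $\psi$ not smooth at $\beta_c$.'' Because $\psi''$ is finite and nonzero away from $\beta_c$, this equivalence can only fail through the behaviour of the specific heat $-\psi''$ at the critical point itself. When the specific heat stays bounded---as in $d\ge 5$ by~\cite{Sok79}---the relation $\sigma''=1/\psi''$ keeps $\sigma''$ bounded away from zero near $u_c$, so the inverse function theorem again forces the two non-smoothness conditions to coincide. When the specific heat diverges in a controlled way---as in $d=2$, where the divergence is logarithmic---one can still track the singularity through the Legendre duality and check directly that non-smoothness is preserved on both sides. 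The hard part, and the source of the stated exception, is dimensions $d=3,4$: there the nature of the singularity of $-\psi''$ at $\beta_c$ is not rigorously understood (only the conjectural logarithmic divergence in $d=4$ and the numerical exponent $\alpha\approx 0.104$ in $d=3$), so one cannot exclude the pathological scenario---exemplified by the explicit $4/3$-power pressure discussed earlier---in which $\psi$ fails to be twice differentiable at $\beta_c$ while its dual $\sigma$ is nonetheless smooth. Establishing the equivalence in these two dimensions would require precisely the missing rigorous control of the critical singularity, which is why the theorem leaves them open.
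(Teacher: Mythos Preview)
Your proposal is correct and follows essentially the same route as the paper: the theorem is presented there with a \qed{} precisely because it is the summary of the preceding Section~\ref{sub:Ising} discussion, and you have reproduced that discussion faithfully---verification of the hypotheses of Theorem~\ref{thm:main} (including \cite{Rao17} for absence of first-order transitions), smoothness of $\psi$ away from $\beta_c$ via \cite{Rue68}, Onsager, \cite{ABF87,DGR18,Leb72}, transfer to $\sigma$ by the inverse function theorem, and the Sokal bound \cite{Sok79} in $d\ge 5$ versus the unresolved critical behaviour in $d=3,4$. The one small omission is the case $h\neq 0$, which the paper disposes of separately at the outset using the analyticity result of Lebowitz--Penrose~\cite{LP68}; you allude to the reduction $h\ge 0$ but should also note that for $h>0$ the pressure is analytic on all of $(0,\infty)$, making the statement trivially true there.
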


\bigskip

We conclude this note with one last comment. For some discrete spin models, the topological hypothesis does not hold in any possible sense. As an easy example of this fact, consider the {\em Curie-Weiss model\/} defined by the spin space~$S=\{-1,1\}$ and the Hamiltonian~$H_\Lambda(\omega)=-\frac{1}{|\Lambda|}\sum_{x,y\in\Lambda}\omega_x\omega_y$. This model is well-known to undergo a phase-transition at~$\beta_c=\frac{1}{2}$ (see e.g~\cite[Chapter~2]{FV17}). However, a direct computation shows that the function~$\sigma$ is constant (equal to~$\log(2)$). Therefore, the non-analytic behavior of the pressure is not reflected in any way in~$\sigma$.

\bibliographystyle{plain}

\end{document}